\newcommand{\be}{\begin{equation}}
\newcommand{\ee}{\end{equation}}
\newcommand{\ba}{\begin{aligned}}
\newcommand{\ea}{\end{aligned}}
\newcommand{\R}{\mathbb{R}}
\newcommand{\N}{\mathbb{N}}
\newcommand{\FF}{\mathbb{F}}
\newcommand{\F}{\mathcal{F}}
\newcommand{\FFtilde}{\widetilde{\mathbb{F}}}
\newcommand{\Ftilde}{\widetilde{\mathcal{F}}}
\newcommand{\Wtilde}{\widetilde{W}}
\newcommand{\Stilde}{\widetilde{S}}
\newcommand{\A}{\mathcal{A}}
\newcommand{\E}{\mathcal{E}}
\newcommand{\D}{\mathcal{D}}
\newcommand{\dd}{{\rm d}}
\newcommand{\ind}{\mathbf{1}}
\newcommand{\lsi}{\left[\negthinspace\left[}
\newcommand{\rsi}{\right]\negthinspace\right]}
\DeclareMathOperator{\esssup}{ess\,sup}
\DeclareMathOperator{\sign}{sign}
\newtheorem{theorem}{\bf Theorem}[section]
\newtheorem{definition}[theorem]{\bf Definition}
\newtheorem{proposition}[theorem]{\bf Proposition}
\newtheorem{lemma}[theorem]{\bf Lemma}
\theoremstyle{remark}
\newtheorem{remark}{\bf Remark}[section]
\numberwithin{equation}{section}
\renewcommand*\@fnsymbol[1]{\the#1}
\title{A note on arbitrage, approximate arbitrage \\ and the fundamental theorem of asset pricing}
\author{\vspace{0.2cm} Claudio Fontana}
\affil{\normalsize{Laboratoire Analyse et Probabilit\'e\\
Universit\'e d'\'Evry Val d'Essonne,\\
23 bd de France, 91037, \'Evry (France)} \\\vspace{0.2cm}
E-mail: \texttt{claudio.fontana@univ-evry.fr}}
\date{This version: November 27, 2013}
\begin{document}

\maketitle

\abstract{\begin{spacing}{1.1}\noindent We provide a critical analysis of the proof of the fundamental theorem of asset pricing given in the paper \emph{Arbitrage and approximate arbitrage: the fundamental theorem of asset pricing} by B. Wong and C.C. Heyde (\emph{Stochastics}, 2010) in the context of incomplete It\^o-process models. We show that their approach can only work in the known case of a complete financial market model and give an explicit counterexample.\end{spacing}}
\vspace{0.5cm}

\begin{spacing}{1}\noindent \textbf{Keywords:} arbitrage; fundamental theorem of asset pricing; It\^o-process; complete market; equivalent local martingale measure; martingale deflator.\\[-0.2cm]

\noindent \textbf{MSC (2010):} 60G44, 60H05, 91B70, 91G10.
\end{spacing}

\section{Introduction}	\label{S1}

One of the central results of mathematical finance is the \emph{Fundamental Theorem of Asset Pricing (FTAP)}, which, in the case of locally bounded processes, asserts the equivalence between the \emph{No Free Lunch with Vanishing Risk (NFLVR)} condition and the existence of an \emph{Equivalent Local Martingale Measure (ELMM)} with respect to which discounted asset prices are local martingales (see \cite{DS1}, Corollary 1.2). Passing from local martingales to $\sigma$-martingales, this fundamental result has been then extended to general semimartingale models in \cite{DS3}. Let us recall that NFLVR reinforces the classical \emph{No Arbitrage} condition and is equivalent to the notion of \emph{No Approximate Arbitrage} considered in \cite{LS,WH} (see e.g. \cite{F}, Lemma 6.2).

Despite its importance, the proof of the FTAP given in \cite{DS1,DS3} for continuous-time models has not been successfully simplified during the last two decades. To the best of our knowledge, the only exceptions are the classical paper \cite{LS}, where the FTAP is proved by relying on purely probabilistic arguments in the context of a complete It\^o-process model, the paper \cite{KarL}, in the context of exponential L\'evy models, and the recent paper \cite{L}, where the author succeeds in presenting a transparent proof of the FTAP for continuous semimartingales whose characteristics are absolutely continuous with respect to the Lebesgue measure.

Recently, a probabilistic and simple proof of the FTAP in the context of general incomplete financial markets based on It\^o-processes has been proposed by \cite{WH}, thus extending considerably the analysis of \cite{LS}. It is well-known that the difficult step in the proof of the FTAP consists in showing that the absence of arbitrage (in the sense of NFLVR) implies the existence of an ELMM. The proof given in \cite{WH} relies on a general characterisation of attainable claims together with the closedness under pairwise maximisation of a family of stochastic exponentials, corresponding to the candidate density processes of ELMMs (see \cite{WH}, Sections 3-4).

In the present paper, we show that the technique adopted by \cite{WH} fails to provide a new proof of the FTAP. More specifically, we show that the closedness under maximisation of a family of stochastic exponentials claimed by \cite{WH}, as well as the approach itself of \cite{WH}, can only yield a proof of the FTAP in the context of complete financial markets, as already considered in \cite{LS}. Moreover, we provide an explicit counterexample showing that in general it is not possible to adapt the ideas of \cite{WH} in order to develop an alternative proof of the FTAP.

The paper is structured as follows. For the convenience of the reader, Section \ref{S2} recalls the financial market model considered in \cite{WH}. Section \ref{S3} critically analyses the proof of their main result, while Section \ref{S4} contains the counterexample. We refer to \cite{F,H} for a unified analysis of several no-arbitrage conditions in the context of  general continuous financial models and, more specifically, to the survey paper \cite{FR} for the properties of It\^o-process models when the NFLVR condition does not necessarily hold.

\section{The financial market model}	\label{S2}

On a given probability space $(\Omega,\F,P)$, let the process $W=\{W(t);0\leq t \leq T\}$ be an $\R^d$-valued Brownian motion, with $T\in(0,\infty)$ denoting a fixed time horizon, and let $\FF=(\F_t)_{0\leq t \leq T}$ be the $P$-augmented natural filtration of $W$.

It is assumed that $k+1$ (with $k\leq d$) securities are available for trade, with prices represented by the $\R^{k+1}$-valued continuous semimartingale $X=\{X(t);0\leq t \leq T\}$. As usual, the $0$th security denotes a locally riskless savings account process:
\[
X_0(t) = \exp\left\{\int_0^t\!r(u)\,\dd u\right\},
\qquad
\text{for all }t\in[0,T],
\]
for a progressively measurable interest rate process $r=\{r(t);0\leq t \leq T\}$ with $\int_0^T\!|r(t)|\dd t<\infty$ $P$-a.s. The remaining $k$ securities are risky and their prices $X_1,\ldots,X_k$ are given by the solutions to the following SDEs, for $i=1,\ldots,k$:
\be	\label{stock}	\left\{ \ba
dX_i(t) &= X_i(t)\,\mu_i(t)\,\dd t+X_i(t)\sum_{j=1}^d\sigma_{ij}(t)\,\dd W_j(t),	\\
X_i(0) &= x_i>0,
\ea	\right. \ee
with $\mu=\{\mu(t);0\leq t \leq T\}$ an $\R^k$-valued progressively measurable process with $\int_0^T\!\!\|\mu(t)\|\dd t<\infty$ $P$-a.s. and $\sigma=\{\sigma(t);0\leq t \leq T\}$ an $\R^{k\times d}$-valued progressively measurable process satisfying $\sum_{i=1}^k\!\sum_{j=1}^d\!\int_0^T\!\!\sigma_{ij}^2(t)\dd t<\infty$ $P$-a.s. (note that the square is missing in \cite{WH}). These assumptions ensure the existence of a unique strong solution to the system of SDEs \eqref{stock}. 
Furthermore, as in \cite{WH}, we suppose that the matrix $\sigma(t)$ has ($P$-a.s.) full rank for every $t\in[0,T]$, meaning that the financial market does not contain redundant assets (see also \cite{FR}, Remark 4.2.2)\footnote{We point out that the assumption that $\sigma(t)$ has $P$-a.s. full rank for every $t\in[0,T]$ can be relaxed by only assuming that $\mu(t)-r(t)I_k\in\{\sigma(t)x:x\in\R^d\}$ $P$-a.s. for every $t\in[0,T]$ (which corresponds to exclude pathological arbitrage possibilities known as \emph{increasing profits}; see \cite{FR}, Proposition 4.3.4), with $I_k:=(1,\ldots,1)^{\!\top}\in\R^k$, and by replacing the matrix $\sigma(t)^{\!\top}(\sigma(t)\,\sigma(t)^{\!\top})^{-1}$ with the Moore-Penrose pseudoinverse of $\sigma(t)$ in \eqref{MPR}.}.

\begin{definition}	\label{adm}
A progressively measurable process $\pi=\{\pi(t);0\leq t \leq T\}$ taking values in $\R^k$ is said to be an \emph{admissible trading strategy} if
\begin{subequations}
\be	\label{adm-1}
\int_0^T\bigl|\pi(t)^{\!\top}\bigl(\mu(t)-r(t)I_k\bigr)\bigr|\,\dd t<\infty \quad \text{$P$-a.s.}
\ee
\be	\label{adm-2}
\int_0^T\bigl\|\pi(t)^{\!\top}\sigma(t)\bigr\|^2\,\dd t<\infty \quad \text{$P$-a.s.}
\ee
\end{subequations}
and if the process $\int_0^{\cdot}X_0(t)^{-1}\,\pi(t)^{\!\top}\!\left(\mu(t)-r(t)I_k\right)\dd t
+\int_0^{\cdot}X_0(t)^{-1}\,\pi(t)^{\!\top}\sigma(t)\,\dd W(t)$ is $P$-a.s. uniformly bounded from below, with $^{\top}$ denoting transposition and $I_k:=(1,\ldots,1)^{\!\top}\in\R^k$.
\end{definition}

We denote by $\A$ the family of all admissible trading strategies. For $\pi\in\A$, the term $\pi_i(t)$ represents the amount of wealth invested on asset $i$ at time $t$, for $i=1,\ldots,k$ and $t\in[0,T]$. As usual, we assume that trading is done in a self-financing way. This amounts to requiring that the $X_0$-discounted wealth process $\bar{V}^{x,\pi}=\{\bar{V}^{x,\pi}(t);0\leq t \leq T\}$ associated to a strategy $\pi\in\A$, starting from an initial endowment of $x\in\R_+$, satisfies the following SDE:
\be	\label{wealth}	\left\{	\ba
\dd \bar{V}^{x,\pi}(t) &= X_0(t)^{-1}\,\pi(t)^{\!\top}\bigl(\mu(t)-r(t)I_k\bigr)\,\dd t
+X_0(t)^{-1}\,\pi(t)^{\!\top}\sigma(t)\,\dd W(t),	\\
\bar{V}^{x,\pi}(0) &= x.
\ea	\right. \ee
In turn, for $(x,\pi)\!\in\!\R_+\!\times\!\A$, the undiscounted wealth process $V^{x,\pi}\!=\!\{V^{x,\pi}(t);0\leq t \leq T\}$, defined by $V^{x,\pi}(t)\!:=\!X_0(t)\,\bar{V}^{x,\pi}(t)$, satisfies $V^{x,\pi}(0)\!=\!x$ and
\be	\label{wealth-undisc}
\dd V^{x,\pi}(t) = r(t)V^{x,\pi}(t)\,\dd t+\pi(t)^{\!\top}\bigl(\mu(t)-r(t)I_k\bigr)\,\dd t
+\pi(t)^{\!\top}\sigma(t)\,\dd W(t).
\ee

\begin{remark}
We want to point out that \cite{WH} do not require the integrability condition \eqref{adm-1}. If condition \eqref{NA1} holds, then \eqref{adm-1} follows from \eqref{adm-2}, as can be verified by a simple application of the Cauchy-Schwarz inequality (see e.g. \cite{FR}, Lemma 4.3.21). However, condition \eqref{adm-1} is in general necessary in order to ensure that the $\dd t$-integrals appearing in \eqref{wealth} and \eqref{wealth-undisc} are well-defined.
\end{remark}

\begin{definition}	\label{AO}
A trading strategy $\pi\in\A$ yields an \emph{arbitrage opportunity} if $P\bigl(V^{0,\pi}(T)\geq 0\bigr)=1$ and $P\bigl(V^{0,\pi}(T)>0\bigr)>0$.
\end{definition}

\begin{remark}
In \cite{WH}, a trading strategy $\pi$ is called \emph{tame} if $P\left(\bar{V}^{x,\pi}(t)\geq-1;\forall t\in[0,T]\right)=1$. As can be easily checked, there exist arbitrage opportunities with tame strategies if and only if there exist arbitrage opportunities with admissible trading strategies (in the sense of Definition \ref{adm} and according also to the terminology of \cite{DS1,DS2,DS3}).
\end{remark}

\section{An analysis of the main result of \cite{WH}}	\label{S3}

In this section, we critically analyse the proof of the FTAP given in \cite{WH}. We first show that stochastic exponentials do not possess the closure property stated in Lemma 3.3 of \cite{WH}, except for the particular case of a complete financial market.

\subsection{Market price of risk and martingale deflators}	\label{S3.1}

Since the matrix $\sigma(t)$ is assumed to be of full rank for every $t\in[0,T]$, we can define the \emph{market price of risk} process $\theta=\{\theta(t);0\leq t \leq T\}$ as follows, for all $t\in[0,T]$:
\be	\label{MPR}
\theta(t) := \sigma(t)^{\!\top}\bigl(\sigma(t)\,\sigma(t)^{\!\top}\bigr)^{-1}\bigl(\mu(t)-r(t)I_k\bigr).
\ee
As in Sections 3-4 of \cite{WH}, we introduce the following standing assumption on $\theta$:
\be	\label{NA1}
\int_0^T\!\!\|\theta(t)\|^2\,\dd t<\infty \quad \text{ $P$-a.s.}
\ee
Let us briefly comment on the implications of the above integrability condition.

\begin{definition}	\label{defl}
A strictly positive local martingale $Z\!=\!\{Z(t);0\!\leq\! t\! \leq\! T\}$ with $Z(0)\!=\!1$ is said to be a \emph{martingale deflator} if the product $Z\,\bar{V}^{x,\pi}$ is a local martingale, for all $\pi\!\in\!\A$ and $x\!>\!0$. We denote by $\D$ the set of all martingale deflators.
\end{definition}

As long as \eqref{NA1} holds, we can always define the strictly positive continuous local martingale $Z_0=\{Z_0(t);0\leq t\leq T\}$ as the stochastic exponential $Z_0:=\E\!\left(-\int\!\theta\,\dd W\right)$. A standard application of the integration by parts formula (see e.g. \cite{FR}, Proposition 4.3.9), together with \eqref{wealth} and \eqref{MPR}, yields that $Z_0\in\D$. This shows that $\D\neq\emptyset$ if condition \eqref{NA1} holds. 

\begin{remark}	\label{rem-NA1}
We want to point out that the existence of a martingale deflator has a precise meaning in the context of arbitrage theory. Indeed, as shown in Theorem 4 of \cite{Kar}, $\D\neq\emptyset$ holds if and only if there are \emph{No Arbitrages of the First Kind}, in the sense of Definition 1 of \cite{Kar} (compare also with \cite{F}, Theorem 5.4).
\end{remark}

The next result clarifies the importance of the class of stochastic exponentials considered in \cite{WH}. We denote by $K(\sigma)$ the family of all $\R^d$-valued progressively measurable processes $\nu=\{\nu(t);0\leq t \leq T\}$ such that $\int_0^T\!\|\nu(t)\|^2\dd t<\infty$ $P$-a.s. and $\sigma(t)\nu(t)=0$ $P$-a.s. for a.e. $t\in[0,T]$.

\begin{proposition}	\label{str-defl}
A strictly positive local martingale $Z=\{Z(t);0\leq t \leq T\}$ with $Z(0)=1$ belongs to $\D$ if and only if there exists an $\R^d$-valued process $\nu=\{\nu(t);0\leq t \leq T\}\in K(\sigma)$ such that, for all $t\in[0,T]$:
\be	\label{density}
Z(t) = \E\!\left(-\int\!(\theta+\nu)\,\dd W\right)_{\!\!t}
= Z_0(t)\,\E\!\left(-\int\!\nu\,\dd W\right)_{\!\!t} =: Z_{\nu}(t).
\ee
Moreover, an element $Z\in\D$ is the density process of an ELMM if and only if it is a (uniformly integrable) martingale, i.e., if and only if $E[Z(T)]=1$.
\end{proposition}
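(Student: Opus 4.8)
The plan is to prove the two equivalences separately, relying throughout on the structure imposed by the Brownian filtration and the full-rank assumption on $\sigma$.

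First I would establish the characterisation of $\D$ via \eqref{density}. For the ``if'' direction, I take any $\nu\in K(\sigma)$ and define $Z_{\nu}$ as in \eqref{density}. Since $\int_0^T\|\theta+\nu\|^2\,\dd t<\infty$ $P$-a.s. by \eqref{NA1} and the definition of $K(\sigma)$, the process $Z_{\nu}$ is a well-defined strictly positive continuous local martingale with $Z_{\nu}(0)=1$. To verify $Z_{\nu}\in\D$, I compute $\dd(Z_{\nu}\bar V^{x,\pi})$ by integration by parts, exactly as was done for $Z_0$: the finite-variation ($\dd t$) part contains the term $\pi^{\!\top}(\mu-rI_k)-\pi^{\!\top}\sigma(\theta+\nu)$, and using $\sigma\theta=\mu-rI_k$ (which follows from \eqref{MPR} and full rank) together with $\sigma\nu=0$, this drift vanishes, leaving $Z_{\nu}\bar V^{x,\pi}$ a local martingale for every $\pi\in\A$ and $x>0$. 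The factorisation $Z_{\nu}=Z_0\,\E(-\int\nu\,\dd W)$ is immediate from the additivity of the exponent and the orthogonality $\int\theta^{\!\top}\nu\,\dd t$-type cross terms handled by $\sigma\nu=0$.

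The ``only if'' direction is where I expect the main work. Given an arbitrary $Z\in\D$, I want to represent it in the form \eqref{density}. Because $Z$ is a strictly positive continuous local martingale on a Brownian filtration, the martingale representation theorem applied to $\log Z$ (or directly to $Z$) yields an $\R^d$-valued progressively measurable integrand $\psi$ with $\int_0^T\|\psi\|^2\,\dd t<\infty$ $P$-a.s. such that $Z=\E(\int\psi\,\dd W)$. Writing $\psi=-(\theta+\nu)$, the task reduces to showing that the resulting $\nu:=-\psi-\theta$ lies in $K(\sigma)$, i.e. that $\sigma\nu=0$ a.e. This is the delicate step: it must be forced by the defining property that $Z\bar V^{x,\pi}$ is a local martingale for \emph{all} admissible $\pi$. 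Computing the drift of $Z\bar V^{x,\pi}$ with this general $\psi$ gives a $\dd t$-term proportional to $\pi^{\!\top}(\mu-rI_k)+\pi^{\!\top}\sigma\psi=\pi^{\!\top}\sigma(\theta+\psi)=-\pi^{\!\top}\sigma\nu$, which must vanish for every admissible $\pi$. Since $\pi$ ranges over a rich enough class (one can locally use bounded strategies pointing in any direction in $\R^k$), this forces $\sigma(t)\nu(t)=0$ for a.e.\ $t$, $P$-a.s., giving $\nu\in K(\sigma)$. The main obstacle is handling these arguments at the level of null sets and stopping-time localisation carefully, so that the pointwise orthogonality conclusion is genuinely valid $\dd t\otimes\dd P$-a.e.

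For the second assertion, I would argue as follows. If $Z\in\D$ is a true (uniformly integrable) martingale with $E[Z(T)]=1$, I define $Q$ by $\dd Q/\dd P=Z(T)$; then $Q\sim P$, and since $Z\bar V^{x,\pi}$ is a local martingale under $P$ for every $\pi\in\A$, a standard Bayes-rule computation shows that $\bar V^{x,\pi}$ is a local martingale under $Q$. Taking in particular the buy-and-hold strategies in each discounted risky asset $X_0^{-1}X_i$ shows that the discounted price processes are $Q$-local martingales, so $Q$ is an ELMM and $Z$ is its density process. Conversely, if $Z$ is the density process of some ELMM $Q$, then $Z(t)=E[\dd Q/\dd P\mid\F_t]$ is by construction a uniformly integrable martingale, whence $E[Z(T)]=E[\dd Q/\dd P]=1$. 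I expect this second part to be essentially routine once the first equivalence is in place, the only point requiring a word being the identification of a martingale (rather than merely local-martingale) $Z$ with a genuine probability density, which is exactly the content of $E[Z(T)]=1$ together with strict positivity.
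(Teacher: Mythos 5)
Your proposal is correct and follows essentially the same route as the paper's own (very terse) proof: the martingale representation theorem plus integration by parts with \eqref{wealth} and \eqref{MPR} for the characterisation of $\D$, and Bayes' rule together with the supermartingale property of the positive local martingale $Z_\nu$ for the ELMM assertion. You merely flesh out what the paper leaves as a sketch, correctly flagging the one genuinely delicate point (forcing $\sigma\nu=0$ $\dd t\otimes\dd P$-a.e.\ from the local martingale property over a sufficiently rich, suitably localised class of admissible strategies).
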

\begin{proof}
Recalling that the filtration $\FF$ is generated by $W$, the claim follows from the martingale representation theorem together with the integration by parts formula, using \eqref{wealth} and \eqref{MPR} (compare with \cite{FR}, Lemma 4.3.15). The last assertion follows from Bayes' rule together with the supermartingale property of the positive local martingale $Z_{\nu}$, for $\nu\in K(\sigma)$.
\end{proof}

In their study, \cite{WH} crucially rely on the following claim (see their Lemma 3.3): there exists a process $\lambda=\{\lambda(t);0\leq t\leq T\}\in K(\sigma)$ such that
\be	\label{wrongclaim}
Z_{\lambda}(t) = \underset{\nu\in K(\sigma)}{\esssup}\,Z_{\nu}(t),
\qquad \text{ for all }t\in[0,T].
\ee

Unfortunately, property \eqref{wrongclaim} fails to hold, except for the trivial case $K(\sigma)=\{0\}$, which, in view of Proposition \ref{str-defl}, is in turn equivalent to $\D=\{Z_0\}$. Indeed, in the proof of their Lemma 3.3, \cite{WH} attempt to show that the family $\{Z_{\nu}(t):\nu\in K(\sigma)\}$ is closed under pairwise maximisation, for all $t\in[0,T]$. However, this is not true, as we are now going to show.

Suppose that $K(\sigma)\neq\{0\}$ and, for two distinct elements $\nu_1,\nu_2\in K(\sigma)$, let us define the $\F_t$-measurable set $\chi(t):=\{Z_{\nu_1}(t)\geq Z_{\nu_2}(t)\}$, for every $t\in[0,T]$. Define then the process $\nu_3=\{\nu_3(t);0\leq t \leq T\}\in K(\sigma)$ as follows, for all $t\in[0,T]$:
\[
\nu_3(t) := \nu_1(t)\,\ind_{\chi(t)}+\nu_2(t)\,\ind_{\chi(t)^c}.
\]
It is clear that $Z_{\nu_3}(0)=1$ and, using the notation \eqref{density}:
\be	\label{dyn-1}
\dd Z_{\nu_3}(t) = -Z_{\nu_3}(t)\bigl(\theta(t)+\nu_3(t)\bigr)\,\dd W(t).
\ee
On the other hand, for all $t\in[0,T]$, we have:
\be	\label{dyn-2}
Z_{\nu_1}(t)\vee Z_{\nu_2}(t) = Z_{\nu_1}(t)\,\ind_{\chi(t)}+Z_{\nu_2}(t)\,\ind_{\chi(t)^c}
= Z_{\nu_1}(t)+\bigl(Z_{\nu_2}(t)-Z_{\nu_1}(t)\bigr)^+.
\ee
By the Tanaka-Meyer formula (see \cite{JYC}, Section 4.1.8), we get the following dynamics:
\be	\label{dyn-3}
\dd\bigl(Z_{\nu_2}(t)-Z_{\nu_1}(t)\bigr)^+
= \ind_{\{Z_{\nu_2}(t)>Z_{\nu_1}(t)\}}\,\dd\bigl(Z_{\nu_2}(t)-Z_{\nu_1}(t)\bigr)
+\frac{1}{2}\,L_{\nu_1,\nu_2}^0(t),
\ee
where $L_{\nu_1,\nu_2}^0=\{L_{\nu_1,\nu_2}^0(t);0\leq t \leq T\}$ denotes the local time of the continuous local martingale $Z_{\nu_2}-Z_{\nu_1}$ at the level $0$. From \eqref{dyn-2}-\eqref{dyn-3} we get:
\be	\label{dyn-4}	\ba
\dd\bigl(Z_{\nu_1}(t)\vee Z_{\nu_2}(t)\bigr)
&= -Z_{\nu_2}(t)\ind_{\chi(t)^c}\bigl(\theta(t)+\nu_2(t)\bigr)\,\dd W(t)	\\
&\quad\,
-Z_{\nu_1}(t)\ind_{\chi(t)}\bigl(\theta(t)+\nu_1(t)\bigr)\,\dd W(t)
+\frac{1}{2}\,L_{\nu_1,\nu_2}^0(t)	\\
&= -\bigl(Z_{\nu_1}(t)\vee Z_{\nu_2}(t)\bigr)\bigl(\theta(t)+\nu_3(t)\bigr)\,\dd W(t)
+\frac{1}{2}\,L_{\nu_1,\nu_2}^0(t).
\ea	\ee
By comparing \eqref{dyn-1} with \eqref{dyn-4}, we immediately see that $Z_{\nu_1}(t)\vee Z_{\nu_2}(t)\neq Z_{\nu_3}(t)$, thus contradicting the claim on page 193 of \cite{WH}. Note also that, as a consequence of Skorokhod's lemma (see \cite{JYC}, Lemma 4.1.7.1) it holds that $L_{\nu_1,\nu_2}^0(t)=\sup_{s\leq t}\bigl(-N(s)\bigr)$, for all $t\in[0,T]$, where the local martingale $N=\{N(t);0\leq t \leq T\}$ is defined by:
\[
N(t):=\int_0^t\sign\bigl(Z_{\nu_2}(u)-Z_{\nu_1}(u)\bigr)\,\dd\bigl(Z_{\nu_2}(u)-Z_{\nu_1}(u)\bigr),
\qquad\text{ for all }t\in[0,T].
\]
This implies that the local time $L_{\nu_1,\nu_2}^0$ appearing in \eqref{dyn-4} vanishes if and only if $Z_{\nu_1}=Z_{\nu_2}$, i.e., if and only if $\nu_1(t)=\nu_2(t)$ $P$-a.s. for a.e. $t\in[0,T]$. 
Indeed, if the latter condition holds, it is evident that $L_{\nu_1,\nu_2}^0$ vanishes. Conversely, if $L_{\nu_1,\nu_2}^0(t)=0$ $P$-a.s. for all $t\in[0,T]$, then the identity $L_{\nu_1,\nu_2}^0(t)=\sup_{s\leq t}\bigl(-N(s)\bigr)$ implies that the local martingale $N$ is $P$-a.s. non-negative and, hence, as a consequence of Fatou's lemma, it is a supermartingale. Since $N(0)=0$, this implies that $N=0$ $P$-a.s. In turn, $\langle Z_{\nu_2}-Z_{\nu_1}\rangle=\langle N\rangle=0$ $P$-a.s. implies that $\nu_1(t)=\nu_2(t)$ $P$-a.s. for a.e. $t\in[0,T]$. 
Together with \eqref{dyn-1} and \eqref{dyn-4}, we have thus shown that the trivial case $K(\sigma)=\{0\}$ is the only case where \eqref{wrongclaim} can hold.

More generally, the non-existence of a process $\lambda\in K(\sigma)$ satisfying \eqref{wrongclaim} can also be deduced from the following (almost trivial) result.

\begin{lemma}	\label{martingales}
Let $M_i=\{M_i(t);0\leq t \leq T\}$ be a right-continuous local martingale, for $i=1,2$, with $M_1(0)=M_2(0)$ $P$-a.s. Then $M:=M_1\vee M_2$ is a local martingale if and only if $M_1$ and $M_2$ are indistinguishable.
\end{lemma}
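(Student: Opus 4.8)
The plan is to dispose of the easy implication first and then reduce the converse to the standard fact that a non-negative local martingale is a supermartingale. The ``if'' direction is immediate: if $M_1$ and $M_2$ are indistinguishable, then $M=M_1\vee M_2$ is indistinguishable from $M_1$ and is therefore a local martingale.

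For the ``only if'' direction, suppose that $M=M_1\vee M_2$ is a local martingale. First I would introduce the two auxiliary processes $M-M_1$ and $M-M_2$. Each is a difference of local martingales, hence itself a right-continuous local martingale, and each is non-negative because $M=M_1\vee M_2\geq M_i$ for $i=1,2$. The key structural input is then that a non-negative local martingale is automatically a supermartingale: this follows from Fatou's lemma applied along a localising sequence of stopping times. Thus both $M-M_1$ and $M-M_2$ are non-negative supermartingales.

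Next I would evaluate these supermartingales at the origin. Using the hypothesis $M_1(0)=M_2(0)$ $P$-a.s., we have $M(0)=M_1(0)\vee M_2(0)=M_1(0)=M_2(0)$, so that $(M-M_1)(0)=(M-M_2)(0)=0$ $P$-a.s. The supermartingale property then forces, for every $t\in[0,T]$, the inequality $0\leq E\bigl[(M-M_1)(t)\bigr]\leq(M-M_1)(0)=0$, and since $(M-M_1)(t)\geq 0$ this yields $(M-M_1)(t)=0$ $P$-a.s.; the same argument applies to $M-M_2$. Finally, right-continuity upgrades these pointwise-in-$t$ almost sure identities to indistinguishability, so $M=M_1$ and $M=M_2$ up to indistinguishability, whence $M_1$ and $M_2$ are indistinguishable.

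The only point requiring care is the passage from ``a non-negative local martingale is a supermartingale'' together with a vanishing initial value to the conclusion that the process is indistinguishable from zero; the hypothesis $M_1(0)=M_2(0)$ is exactly what guarantees that $M-M_1$ and $M-M_2$ start at $0$, and the result genuinely fails without it. Everything else is routine, which justifies the description of the statement as almost trivial.
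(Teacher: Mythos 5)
Your proof is correct and follows essentially the same route as the paper: reduce to a non-negative right-continuous local martingale vanishing at time $0$, invoke Fatou's lemma to get the supermartingale property, conclude the process is zero, and upgrade to indistinguishability via right-continuity; the only cosmetic difference is that the paper uses the single process $N=M-(M_1+M_2)/2=|M_1-M_2|/2$ where you use the pair $M-M_1$ and $M-M_2$. One trivial notational slip: the supermartingale inequality should read $E\bigl[(M-M_1)(t)\bigr]\leq E\bigl[(M-M_1)(0)\bigr]=0$ rather than bounding by the random variable $(M-M_1)(0)$ itself.
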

\begin{proof}
If $M=M_1\vee M_2$ is a local martingale, then the process $N=\{N(t);0\leq t \leq T\}$ defined by $N(t):=M(t)-\bigl(M_1(t)+M_2(t)\bigr)/2$, for all $t\in[0,T]$, is a non-negative local martingale and, by Fatou's lemma, also a supermartingale. Since $N(0)=0$ $P$-a.s., the supermartingale property implies $N(t)=0$ $P$-a.s. for all $t\in[0,T]$, meaning that $M_1(t)=M_2(t)$ $P$-a.s. for all $t\in[0,T]$. Right-continuity then implies that $P\bigl(M_1(t)=M_2(t),\forall t\in[0,T]\bigr)=1$.
The converse implication is evident.
\end{proof}

\begin{remark}
Besides the failure of the closedness under pairwise maximisation of the family $\{Z_{\nu}(t):\nu\in K(\sigma)\}$, the proof of Lemma 3.3 of \cite{WH} suffers from an additional technical problem. Indeed, in \cite{WH} the authors define a process $Z^*=\{Z^*(t);0\leq t \leq T\}$ as $Z^*(t):=\esssup_{\nu\in K(\sigma)}Z_{\nu}(t)$, for all $t\in[0,T]$, and a sequence of $[0,T]$-valued random variables $\{\tau_k\}_{k\in\N}$ by $\tau_k:=\inf\{t\in[0,T]:Z^*(t)\geq k\}\wedge T$, for $k\in\N$. However, a priori we do not know whether the process $Z^*$ is progressively measurable (or at least admits a right-continuous modification) and, hence, $\tau_k$ can fail to be a stopping time.
\end{remark}

\subsection{Attainable claims and the proof of Theorem 1.1 in \cite{WH}}	\label{S3.2}

The proof of the FTAP given in \cite{WH} relies on the characterisation of attainable claims. The notion of attainability implicitly adopted in Section 3 of \cite{WH} corresponds to the following definition. We always suppose that the standing assumption \eqref{NA1} holds true.

\begin{definition}
An $\F_T$-measurable non-negative random variable $B$ is said to be \emph{attainable} if there exist $\pi\in\A$ and $x\in\R_+$ such that $V^{x,\pi}(T)=B$ $P$-a.s. and if there exists a martingale deflator $Z_{\nu}\in\D$ such that $Z_{\nu}\bar{V}^{x,\pi}$ is a martingale.
\end{definition}

As long as $\D\neq\emptyset$, the following characterisation of attainable claims has been established in a general semimartingale setting in Theorem 3.2 of \cite{SY} (in the present setting, compare also with \cite{KLSX}, Theorem 8.5, and \cite{R}, Theorem 5 of Chapter 3). In particular, note that the existence of an ELMM is not necessarily assumed.

\begin{proposition}	\label{attain}
Let $B$ be an $\F_T$-measurable non-negative random variable. Then, under the standing assumption \eqref{NA1}, the following are equivalent:
\begin{enumerate}
\item[(i)]
$B$ is attainable;
\item[(ii)]
$\exists\,\lambda\in K(\sigma)$ such that 
\[
E\left[Z_{\lambda}(T)B/X_0(T)\right]
=\sup_{\nu\in K(\sigma)}E\left[Z_{\nu}(T)B/X_0(T)\right]<\infty.
\]
\end{enumerate}
Moreover, the initial endowment $x\in\R_+$ needed to replicate an attainable claim $B$ is given by $x=\sup_{\nu\in K(\sigma)}E\left[Z_{\nu}(T)B/X_0(T)\right]$.
\end{proposition}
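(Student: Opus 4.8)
The plan is to prove the equivalence $(i)\Leftrightarrow(ii)$ together with the pricing formula by invoking the general superreplication duality of \cite{SY}, specialised to the present It\^o-process setting. The key observation is that, by Proposition \ref{str-defl}, the set $\D$ of martingale deflators coincides exactly with the family $\{Z_\nu:\nu\in K(\sigma)\}$, so that the abstract dual set of density processes appearing in the superhedging theorem of \cite{SY} admits the explicit parametrisation by $K(\sigma)$ used in the statement. First I would recall that, under the standing assumption \eqref{NA1}, we have $\D\neq\emptyset$ (indeed $Z_0\in\D$), which is precisely the hypothesis needed to apply the results of \cite{SY}: the existence of a martingale deflator (equivalently, No Arbitrage of the First Kind, see Remark \ref{rem-NA1}) suffices, and no ELMM need exist.

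The forward implication $(i)\Rightarrow(ii)$ is the more delicate half. Suppose $B$ is attainable, so that $V^{x,\pi}(T)=B$ for some $(x,\pi)\in\R_+\times\A$ and there exists $Z_{\lambda}\in\D$ making $Z_{\lambda}\bar{V}^{x,\pi}$ a true martingale. Taking expectations at $t=T$ and using $Z_\lambda(0)=1$, $\bar{V}^{x,\pi}(0)=x$ together with $\bar{V}^{x,\pi}(T)=B/X_0(T)$ yields $E[Z_\lambda(T)B/X_0(T)]=x$. For the remaining deflators $Z_\nu$, $\nu\in K(\sigma)$, the product $Z_\nu\bar{V}^{x,\pi}$ is a non-negative local martingale (non-negativity coming from the admissibility lower-bound in Definition \ref{adm}, which makes $\bar{V}^{x,\pi}$ bounded below, combined with a constant shift), hence a supermartingale by Fatou's lemma; this gives $E[Z_\nu(T)B/X_0(T)]\leq x$ for every $\nu$. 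Combining the two, $\lambda$ attains the supremum and the common value equals the finite initial endowment $x$, which simultaneously establishes $(ii)$ and the asserted pricing formula.

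For the converse $(ii)\Rightarrow(i)$, the strategy is to set $x:=\sup_{\nu\in K(\sigma)}E[Z_\nu(T)B/X_0(T)]<\infty$ and construct the replicating pair. Here I would lean directly on Theorem 3.2 of \cite{SY}: the finiteness of the dual value guarantees that $B$ can be superreplicated from initial capital $x$, and since $\lambda$ attains the supremum, the associated deflated wealth process $Z_\lambda\bar{V}^{x,\pi}$ is not merely a supermartingale but a genuine martingale (equality of initial and terminal expectations forces the supermartingale to be a martingale), which is exactly the attainability requirement in the definition. The trading strategy $\pi\in\A$ itself is recovered through the martingale representation theorem, as in Proposition \ref{str-defl}, applied to the martingale $t\mapsto E[Z_\lambda(T)B/X_0(T)\mid\F_t]/Z_\lambda(t)$, whose integrand furnishes the candidate portfolio after inverting the full-rank relation involving $\sigma$.

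The main obstacle I anticipate is the converse direction, specifically verifying that the superreplicating strategy produced by the abstract duality of \cite{SY} actually \emph{replicates} (achieves terminal equality $V^{x,\pi}(T)=B$ rather than mere domination) and that it lies in the admissible class $\A$ of Definition \ref{adm}; the passage from superreplication to exact replication is precisely where the attainment of the supremum by some $\lambda\in K(\sigma)$ is essential, and care is needed to confirm that the representing integrand satisfies the square-integrability condition \eqref{adm-2} and yields a wealth process bounded below. Since these technical points are handled in full generality in \cite{SY,KLSX,R}, I would in practice cite those references for the heavy lifting and content myself with translating their hypotheses into the present framework via Proposition \ref{str-defl}, emphasising only that $\D\neq\emptyset$ is the sole structural assumption required.
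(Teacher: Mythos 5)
The paper offers no proof of Proposition \ref{attain} at all: it is quoted as a known result, namely Theorem 3.2 of \cite{SY} (with \cite{KLSX}, Theorem 8.5, and \cite{R} as comparisons in the Brownian setting), the only point the paper stresses being that $\D\neq\emptyset$ suffices and no ELMM is needed. Your overall plan --- parametrising the dual set via Proposition \ref{str-defl} and delegating the superreplication duality for \emph{(ii)}$\Rightarrow$\emph{(i)} to \cite{SY} --- is therefore exactly the paper's route, and your emphasis that $\D\neq\emptyset$ (equivalently, via \eqref{NA1}, $Z_0\in\D$) is the sole structural hypothesis is correctly placed.

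There is, however, one step in your spelled-out direction \emph{(i)}$\Rightarrow$\emph{(ii)} that fails as written: the claim that the admissibility lower bound ``combined with a constant shift'' makes $Z_\nu\bar{V}^{x,\pi}$ a non-negative local martingale. Admissibility gives $\bar{V}^{x,\pi}\geq x-a$ for some constant $a$, so what is non-negative is $Z_\nu\bigl(\bar{V}^{x,\pi}+a'\bigr)$ for a suitable $a'\geq0$; this is a supermartingale, but undoing the shift only yields $E\bigl[Z_\nu(T)\bar{V}^{x,\pi}(T)\bigr]\leq x+a'\bigl(1-E[Z_\nu(T)]\bigr)$, since $Z_\nu$ is itself only a supermartingale. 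When $Z_\nu$ is a strict local martingale --- precisely the regime this proposition must cover, as no ELMM is assumed, and the generic situation in the Section \ref{S4} counterexample where $E[Z_\nu(T)]<1$ for \emph{every} $\nu\in K(\sigma)$ --- this bound exceeds $x$, and the asserted supremum and pricing formula do not follow. The repair is short and uses the attainability hypothesis itself: since some $Z_\lambda\bar{V}^{x,\pi}$ is a true martingale with terminal value $Z_\lambda(T)B/X_0(T)\geq0$, one has $Z_\lambda(t)\bar{V}^{x,\pi}(t)=E\bigl[Z_\lambda(T)\bar{V}^{x,\pi}(T)\mid\F_t\bigr]\geq0$, and strict positivity of $Z_\lambda$ forces $\bar{V}^{x,\pi}\geq0$ on $[0,T]$; only then is $Z_\nu\bar{V}^{x,\pi}$ a non-negative local martingale, hence a supermartingale, for every $\nu\in K(\sigma)$, giving $E\bigl[Z_\nu(T)B/X_0(T)\bigr]\leq x$ with equality at $\nu=\lambda$. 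Your converse direction and the anticipated difficulties there (passing from superreplication to exact replication via attainment of the supremum, and verifying $\pi\in\A$) are accurately described and legitimately deferred to \cite{SY,KLSX,R}, just as the paper does.
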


In Section 4 of \cite{WH}, the implication \emph{(ii)}$\Rightarrow$\emph{(i)} of Proposition \ref{attain}, with $B=X_0(T)$, is used together with property \eqref{wrongclaim} in order to get the existence of an element $\pi\in\A$ such that $V^{c,\pi}(T)=X_0(T)$ $P$-a.s., starting from the initial investment $c:=E[Z_{\lambda}]$. If there does not exist any ELMM, then $c<1$ (see the last claim of Proposition \ref{str-defl}) and the portfolio process $V^{0,\pi}=\{V^{0,\pi}(t);0\leq t\leq T\}$ defined as $V^{0,\pi}(t):=V^{c,\pi}(t)-cX_0(t)$, for all $t\in[0,T]$, realises an arbitrage opportunity in the sense of Definition \ref{AO}, since $V^{0,\pi}(T)=(1-c)X_0(T)>0$ $P$-a.s.
This would prove that the absence of arbitrage opportunities (together with the standing assumption \eqref{NA1}, which excludes arbitrages of the first kind, see Remark \ref{rem-NA1}) implies the existence of an ELMM. Unfortunately, the fact that \eqref{wrongclaim} fails to hold, as shown in Section \ref{S3.1}, invalidates the proof of the FTAP proposed in \cite{WH}.

Actually, it can be shown that property \eqref{wrongclaim} and, hence, the arguments used in the proof of Theorem 1.1 in \cite{WH}, are valid if and only if the financial market is \emph{complete}, in the sense that every $\F_T$-measurable non-negative random variable $B$ such that $B/X_0(T)$ is bounded is attainable. 
Indeed, if the financial market is complete, the results of \cite{SY} (see also \cite{thes}, Theorem 4.5.13) imply that $\D=\{Z_0\}$. Due to Proposition \ref{str-defl}, the latter property is equivalent to $K(\sigma)=\{0\}$, in which case property \eqref{wrongclaim} trivially holds true. Conversely, if property \eqref{wrongclaim} holds, then the implication \emph{(ii)}$\Rightarrow$\emph{(i)} in Proposition \ref{attain} immediately shows that the financial market is complete.
In particular, if the financial market is complete, the payoff $X_0(T)$ can be replicated by a portfolio $V^{c,\pi}$ with $c=E[Z_0(T)]$ and $\pi\in\A$. In that case, the same arguments used in Section 4 of \cite{WH} allow to prove that the absence of arbitrage opportunities implies that $Z_0$ is the density process of the (unique) ELMM. We have thus shown that the proof of the FTAP proposed by \cite{WH} only works in the context of complete financial markets, as originally considered in \cite{LS}.

We want to remark that, in view of the implication \emph{(ii)}$\Rightarrow$\emph{(i)} of Proposition \ref{attain}, the proof of Theorem 1.1 in \cite{WH} does not actually need the full strength of the (untrue) property \eqref{wrongclaim}, but only the existence of a process $\lambda\in K(\sigma)$ such that $E[Z_{\lambda}(T)]=\sup_{\nu\in K(\sigma)}E[Z_{\nu}(T)]$. However, such a process does not necessarily exist in an incomplete market (if arbitrage opportunities are not a priori excluded), as will be shown in Section \ref{S4} by means of an explicit counterexample. This means that the approach of \cite{WH} (or an extension thereof) cannot yield an alternative proof of the FTAP in the context of general incomplete financial markets based on It\^o processes.

\section{A counterexample}	\label{S4}

This section exhibits a simple model of an incomplete financial market where the FTAP cannot be proved by relying on the techniques adopted in \cite{WH} (or an extension thereof). In view of the discussion at the end of the preceding section, we shall construct a market model for which there does not exist an element $\lambda\in K(\sigma)$ such that $E[Z_{\lambda}(T)]=\sup_{\nu\in K(\sigma)}E[Z_{\nu}(T)]$. More precisely, we will show that $E[Z_{\nu}(T)]<1$ for all $\nu\in K(\sigma)$, while $\sup_{\nu\in K(\sigma)}E[Z_{\nu}(T)]=1$.

Let $(\Omega,\F,\FF,P)$ be a given filtered probability space, where $\FF$ denotes the natural $P$-augmented filtration of a real-valued Brownian motion $W$ and $\F=\F_T$. For a fixed constant $a>0$, define the stopping time $\tau:=\inf\{t\in[0,T]:W(t)\geq a\}\wedge T$. It is clear that $P(\tau<T)>0$. Suppose that $r\equiv 0$ and let the price process $S$ of a single risky asset be given as the solution to the following SDE:
\be	\label{SDE-S}	\left\{	\ba
\dd S(t) &= \ind_{\{t>\tau\}}\bigl(1/S(t)\bigr)\,\dd t + \ind_{\{t>\tau\}}\dd W(t),	\\
S(0) &= 1.
\ea	\right.	\ee

\begin{lemma}	\label{prop-SDE}
The SDE \eqref{SDE-S} admits a unique strong solution $S=\{S(t);0\leq t \leq T\}$ that is $P$-a.s. strictly positive. Moreover, $E\left[1/S(T)|\F_{\tau}\right]<1$ $P$-a.s. on $\{\tau<T\}$.
\end{lemma}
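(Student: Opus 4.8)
The plan is to analyze the SDE \eqref{SDE-S} in two regimes separated by the stopping time $\tau$. Before $\tau$, the drift and diffusion coefficients both vanish (because of the indicator $\ind_{\{t>\tau\}}$), so $S$ stays constant at its initial value $S(0)=1$; hence $S(t)=1$ for all $t\le\tau$, in particular $S(\tau)=1$ on $\{\tau<T\}$. After $\tau$, on the event $\{\tau<T\}$, the process solves $\dd S(t)=(1/S(t))\,\dd t+\dd W(t)$, which I recognize as a Bessel-type SDE. First I would establish existence and uniqueness of a strong solution: on $[0,\tau]$ the solution is trivially the constant $1$, and on $(\tau,T]$ I would appeal to standard SDE theory, noting that the coefficients $x\mapsto 1/x$ and $x\mapsto 1$ are locally Lipschitz away from $0$, so a unique strong solution exists up to the explosion time at which $S$ would hit $0$.

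Next I would prove strict positivity $P$-a.s. The key observation is that the post-$\tau$ dynamics coincide with those of a squared-Bessel-related process, or more directly a Bessel process of dimension $3$: if $R$ is a $\mathrm{BES}(3)$ process started at $1$, it satisfies $\dd R=(1/R)\,\dd t+\dd\beta$ for a Brownian motion $\beta$, and it is a classical fact that $\mathrm{BES}(3)$ never reaches $0$ when started from a strictly positive value. By the strong Markov property at $\tau$ and uniqueness in law, $S$ restarted from $S(\tau)=1$ behaves exactly as a $\mathrm{BES}(3)$ process on $\{\tau<T\}$, and as the constant $1$ on $\{\tau=T\}$; in either case $S>0$ throughout $[0,T]$. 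This handles the first assertion of the lemma.

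For the final inequality $E[1/S(T)\mid\F_\tau]<1$ $P$-a.s.\ on $\{\tau<T\}$, I would use the well-known property that the reciprocal of a $\mathrm{BES}(3)$ process is a strict local martingale. Concretely, conditionally on $\F_\tau$ and on the event $\{\tau<T\}$, the process $\{1/S(\tau+s);\,0\le s\le T-\tau\}$ is a strictly positive continuous local martingale (this follows from It\^o's formula applied to $x\mapsto 1/x$: the drift $(1/S)\,\dd t$ is exactly what is needed to cancel the It\^o-correction term, leaving $\dd(1/S)=-(1/S^2)\,\dd W$). A strictly positive local martingale is a supermartingale by Fatou's lemma, and the crucial point is that for $\mathrm{BES}(3)$ it is a \emph{strict} supermartingale: its expectation strictly decreases. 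Thus $E[1/S(T)\mid\F_\tau]<1/S(\tau)=1$ on $\{\tau<T\}$.

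The main obstacle is making the strict inequality rigorous rather than merely establishing the supermartingale ($\le$) inequality. The non-strict bound follows immediately from the local-martingale/supermartingale argument, but the strictness requires genuinely using the explicit structure of $\mathrm{BES}(3)$. I would handle this either by invoking the known closed form for the expectation $E[1/R_s\mid R_0=1]$ for a $\mathrm{BES}(3)$ process (which can be computed via the Gaussian transition density of the underlying three-dimensional Brownian motion and is strictly less than $1$ for every $s>0$), or by arguing that if equality held on a set of positive measure then $1/S$ restarted at $\tau$ would be a true martingale on a nondegenerate time interval, contradicting the strict-local-martingale property of the $\mathrm{BES}(3)$ reciprocal. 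Since on $\{\tau<T\}$ one has $T-\tau>0$ with positive conditional probability of a nondegenerate elapsed time, the strict decrease is genuine, yielding the claimed strict inequality.
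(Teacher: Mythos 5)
Your proposal is correct and follows essentially the same route as the paper: decompose at $\tau$, use the strong Markov property to identify the post-$\tau$ dynamics with a three-dimensional Bessel process started at $1$ (whence existence, uniqueness and strict positivity), and deduce the strict inequality from the explicit closed form $E\bigl[1/\Stilde(u)\bigr]=2\,\Phi\bigl(1/\sqrt{u}\bigr)-1<1$ evaluated at $u=T-\tau$, which is precisely the computation in \eqref{expect}. The only difference is cosmetic: you also sketch a fallback strictness argument via the strict-local-martingale property of the Bessel reciprocal, while the paper goes directly through the closed form.
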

\begin{proof}
Let us define the filtration $\FFtilde=(\Ftilde_t)_{0\leq t \leq T}$ by $\Ftilde_t:=\F_{(\tau+t)\wedge T}$, for $t\in[0,T]$, and the process $\Wtilde=\{\Wtilde(t);0\leq t \leq T\}$ by $\Wtilde(t):=W\bigl((\tau+t)\wedge T\bigr)-W(\tau)$, for $t\in[0,T]$. 
Clearly, the process $\Wtilde$ is continuous, $\FFtilde$-adapted and satisfies $\Wtilde(0)=0$.
Furthermore, the strong Markov property of the Brownian motion $W$ implies that $\Wtilde$ is also a Brownian motion (stopped at $T-\tau$) with respect to the filtration $\FFtilde$ (in particular, it is independent from $\Ftilde_0=\F_{\tau}$).
On the filtered probability space $(\Omega,\F,\FFtilde,P)$, let us consider the following SDE:
\be	\label{SDE-Bessel}	\left\{	\ba
\dd\Stilde(t) &= \bigl(1/\Stilde(t)\bigr)\,\dd t+\dd\Wtilde(t),	\\
\Stilde(0) &= 1.
\ea	\right.	\ee
The SDE \eqref{SDE-Bessel} admits a unique strong solution $\Stilde=\{\Stilde(t);0\leq t \leq T-\tau\}$, known as the three-dimensional Bessel process (see \cite{JYC}, Section 6.1), which satisfies $\Stilde(t)>0$ $P$-a.s. for all $t\in[0,T-\tau]$. Define then a process $S=\{S(t);0\leq t\leq T\}$ by $S(t):=\ind_{\{t\leq\tau\}}+\ind_{\{t>\tau\}}\Stilde(t-\tau)$, for all $t\in[0,T]$. Clearly, since $\tau$ is an $\FF$-stopping time and $\FF$ is right-continuous, the process $S$ is $\FF$-adapted, $P$-a.s. strictly positive and satisfies $S(0)=1$. Moreover, since $\Stilde$ is continuous:
\[	\ba
\dd S(t) &= \ind_{\{t>\tau\}}\dd\Stilde(t-\tau)	\\
&= \ind_{\{t>\tau\}}\bigl(1/\Stilde(t-\tau)\bigr)\,\dd t + \ind_{\{t>\tau\}}\dd\Wtilde(t-\tau)	\\
&= \ind_{\{t>\tau\}}\bigl(1/S(t)\bigr)\,\dd t + \ind_{\{t>\tau\}}\dd W(t),  
\ea	\]
thus showing that the process $S$ solves the SDE \eqref{SDE-S}. The second assertion of the lemma can be proved as follows, on the set $\{\tau<T\}$:
\be	\label{expect}
E\left[1/S(T)|\F_{\tau}\right]
= E\bigl[1/\Stilde(T-\tau)|\Ftilde_0\bigr]
= E\bigl[1/\Stilde(u)\bigr]\Bigr|_{u=T-\tau}
= 2\,\Phi\left(\frac{1}{\sqrt{T-\tau}}\right)-1<1,
\ee
where the second equality follows due to the independence of $\Wtilde$ and $\Ftilde_0$, together with the $\Ftilde_0$-measurability of $\tau$, and the third equality from Exercise 6.1.5.5 of \cite{JYC}, with $\Phi(\cdot)$ denoting the distribution function of a standard Gaussian random variable.
\end{proof}

According to the notation introduced in Section \ref{S3.1}, the market price of risk process $\theta$ is given by $\theta(t)=1/S(t)\ind_{\{t>\tau\}}$, for $t\in[0,T]$. Due to the continuity of $S$, the standing assumption \eqref{NA1} is satisfied, and hence, by It\^o's formula:
\be	\label{SDE-Z_0}
Z_0(T) = \exp\left(-\int_{\tau}^T\!\!\!\frac{1}{S(t)}\,\dd W(t)
-\frac{1}{2}\int_{\tau}^T\!\!\!\frac{1}{S(t)^2}\,\dd t\right)
= \frac{1}{S(T)}.
\ee
In the present context, a real-valued progressively measurable process $\nu=\{\nu(t);0\leq t \leq T\}$ belongs to $K(\sigma)$ if and only if it satisfies $\nu(t)=\nu(t)\ind_{\{t\leq\tau\}}$ $P$-a.s. for all $t\in[0,T]$ and $\int_0^{\tau}\!\nu(t)^2\dd t<\infty$ $P$-a.s. Proposition \ref{str-defl}, equation \eqref{SDE-Z_0} and Lemma \ref{prop-SDE} then imply the following, for every $\nu\in K(\sigma)$:
\[	\ba
E[Z_{\nu}(T)]
&= E\left[Z_0(T)\,\E\!\left(-\int\!\nu\,\dd W\right)_{\!\!\tau}\right]
= E\left[\frac{1}{S(T)}\,\E\!\left(-\int\!\nu\,\dd W\right)_{\!\!\tau}\right]	\\
&= E\left[E\left[\frac{1}{S(T)}\Bigr|\F_{\tau}\right]\E\!\left(-\int\!\nu\,\dd W\right)_{\!\!\tau}\right]
< E\left[\E\!\left(-\int\!\nu\,\dd W\right)_{\!\!\tau}\right]\leq 1,
\ea	\]
where the last inequality is due to the supermartingale property of $\E\left(-\int\!\nu\,\dd W\right)$, for $\nu\in K(\sigma)$. Due to the last assertion of Proposition \ref{str-defl}, this implies that the financial market model considered in this section does not admit an ELMM\footnote{Models based on three-dimensional Bessel processes are classical examples of financial markets that allow for arbitrage opportunities, see for instance \cite{DS2}, example 6.8 in \cite{thes} and the example on page 59 of \cite{FR}.}.

Moreover, there does not exist an element $\lambda\in K(\sigma)$ such that $E[Z_{\lambda}(T)]=\sup_{\nu\in K(\sigma)}E[Z_{\nu}(T)]$, as we are now going to show. For every $n\in\N$, let $\nu_n:=n\ind_{\lsi0,\tau\rsi}$. Clearly, we have $\nu_n\in K(\sigma)$ and $\E\left(-\int\!\nu_n\,\dd W\right)$ is a uniformly integrable martingale (stopped at $\tau$), for all $n\in\N$. Hence, due to Proposition \ref{str-defl} and equations \eqref{expect}-\eqref{SDE-Z_0}, we get, for any $n\in\N$:
\[	\ba
E[Z_{\nu_n}(T)]
&= E\left[Z_0(T)\,\E\!\left(-\int\!\nu_n\,\dd W\right)_{\!\!\tau}\right]
= E\left[\frac{1}{S(T)}\,\E\!\left(-\int\!\nu_n\,\dd W\right)_{\!\!\tau}\right]	\\
&= E\left[\E\!\left(-\int\!\nu_n\,\dd W\right)_{\!\!\tau}\right]
+E\left[\E\!\left(-\int\!\nu_n\,\dd W\right)_{\!\!\tau}\left(\frac{1}{S(T)}-1\right)\ind_{\{\tau<T\}}\right]	\\
&= 1-2\,E\left[\E\!\left(-\int\!\nu_n\,\dd W\right)_{\!\!\tau}\left(1-\Phi\Bigl(\frac{1}{\sqrt{T-\tau}}\Bigr)\right)\ind_{\{\tau<T\}}\right]	\\
&= 1-2\,E\left[\exp\biggl(-n\tau\Bigl(\frac{a}{\tau}+\frac{n}{2}\Bigr)\biggr)
\left(1-\Phi\Bigl(\frac{1}{\sqrt{T-\tau}}\Bigr)\right)\ind_{\{\tau<T\}}\right].
\ea	\]
By dominated convergence, the last equality implies that:
\[
1 = \lim_{n\rightarrow+\infty}E[Z_{\nu_n}(T)] \leq \sup_{\nu\in K(\sigma)}E[Z_{\nu}(T)] \leq 1,
\]
thus showing that $\sup_{\nu\in K(\sigma)}E[Z_{\nu}(T)]=1$. In particular, the supremum is not attained by any element $\nu\in K(\sigma)$.

\subsubsection*{Acknowledgements}
\begin{spacing}{1}
This research was supported by a Marie Curie Intra European Fellowship within the 7th European Community Framework Programme under grant agreement PIEF-GA-2012-332345. The author is thankful to two anonymous referees for valuable comments that helped to improve the paper. Part of this research was carried out while the author was a post-doctoral researcher at INRIA Paris-Rocquencourt, within the MATHRISK team.
\end{spacing}


\begin{thebibliography}{10}

\begin{spacing}{1.05}

\bibitem{DS1}
Delbaen, F. \& Schachermayer, W. (1994), {\em A general version of the fundamental theorem of asset pricing}, Mathematische Annalen 300, pp. 464--520.

\bibitem{DS2}
Delbaen, F. \& Schachermayer, W. (1995), {\em Arbitrage possibilities in Bessel processes and their relations to local martingales}, Probability Theory and Related Fields 102, pp. 357--366.

\bibitem{DS3}
Delbaen, F. \& Schachermayer, W. (1998), {\em The fundamental theorem of asset pricing for unbounded stochastic processes}, Mathematische Annalen 312, pp. 215--250.

\bibitem{thes}
Fontana, C. (2012), {\em Four Essays in Financial Mathematics}, PhD thesis, University of Padova.

\bibitem{F}
Fontana, C. (2013), {\em Weak and strong no-arbitrage conditions for continuous financial markets}, preprint, available at \texttt{http://arxiv.org/abs/1302.7192}.

\bibitem{FR}
Fontana, C. \& Runggaldier, W.J. (2013), {\em Diffusion-based models for financial markets without martingale measures}, in {\em Risk Measures and Attitudes}, F. Biagini, A. Richter and H. Schlesinger, eds., EAA Series, Springer, London, pp. 45--81.

\bibitem{H}
Hulley, H. (2009), {\em Strict Local Martingales in Continuous Financial Market Models}, PhD thesis, University of Technology Sydney.

\bibitem{JYC}
Jeanblanc, M., Yor, M. \& Chesney, M. (2009), {\em Mathematical Methods for Financial Markets}, Springer, London.

\bibitem{KLSX}
Karatzas, I., Lehoczky, J.P., Shreve, S.E., \& Xu, G.-L. (1991), {\em Martingale and duality methods for utility maximization in an incomplete market}, SIAM Journal on Control and Optimization 29, pp. 702--730.

\bibitem{KarL}
Kardaras, C. (2009), {\em No-free-lunch equivalences for exponential L\'evy models under convex constraints on investment}, Mathematical Finance 19, pp. 161--187.

\bibitem{Kar}
Kardaras, C. (2010), {\em Finitely additive probabilities and the fundamental theorem of asset pricing}, in {\em Contemporary Quantitative Finance: Essays in Honour of Eckhard Platen}, C. Chiarella and A. Novikov, eds., Springer, Berlin - Heidelberg, pp. 19--34.

\bibitem{LS}
Levental, S. \& Skorohod, A.V. (1995), {\em A necessary and sufficient condition for absence of arbitrage with tame portfolios}, Annals of Applied Probability 5, pp. 906--925.

\bibitem{L}
Lyasoff, A. (2012), {\em The two fundamental theorems of asset pricing for a class of continuous-time financial markets}, preprint, to appear in Mathematical Finance.

\bibitem{R}
Ruf, J. (2011), {\em Optimal Trading Strategies Under Arbitrage}, PhD thesis, Columbia university.

\bibitem{SY}
Stricker, C. \& Yan, J.-A. (1998), {\em Some remarks on the optional decomposition theorem}, in {\em Séminaire de Probabilités XXXII}, J. Azéma, M. Emery, M. Ledoux and M. Yor, eds., Lecture Notes in Mathematics vol. 1686, Springer, Berlin - Heidelberg - New York, pp. 56--66.

\bibitem{WH}
Wong, B. \& Heyde, C.C. (2010) {\em Arbitrage and approximate arbitrage: the fundamental theorem of asset pricing}, Stochastics 82, pp. 189--200.

\end{spacing}

\end{thebibliography}
\end{document}